\documentclass{article}

\usepackage{arxiv}
\usepackage[utf8]{inputenc} 
\usepackage[T1]{fontenc}    
\usepackage{hyperref}       
\usepackage{url}            
\usepackage{booktabs}       
\usepackage{amsfonts}       
\usepackage{nicefrac}       
\usepackage{microtype}      
\usepackage{lipsum}
\usepackage{xcolor}
\usepackage{graphicx}
\graphicspath{ {./images/} }
\usepackage{bm}

\usepackage{algorithm}
\usepackage{algpseudocode}

\usepackage{amsthm} 
\usepackage{amsmath} 
\usepackage{amssymb} 

\usepackage{dsfont}

\usepackage[round]{natbib} 

\usepackage{multirow}

\usepackage{graphicx}
\usepackage{subcaption}

\newtheorem{theorem}{Theorem}

\title{A new approach for Bayesian joint modeling of longitudinal and cure-survival outcomes using the defective Gompertz distribution}

\author{
 Dionisio Silva Neto \\
  Institute of Mathematics and Computer Sciences\\
  University of S{\~a}o Paulo\\
  S{\~a}o Carlos, Brazil \\ 
  \texttt{dionisioneto@usp.br} \\
   \And
 Denis Rustand \\
  Bordeaux Population Health, U1219\\
  University of Bordeaux, INSERM\\
  Bordeaux, France\\
  \texttt{denis.rustand@u-bordeaux.fr} \\
  \And
 H{\aa}vard Rue\\
  Department of Statistics, CEMSE\\
  King Abdullah University of Science and Technology\\
  Thuwal, Saudi Arabia \\
  \texttt{haavard.rue@kaust.edu.sa} \\
  \And
 Danilo Alvares \\
  MRC Biostatistics Unit\\
  University of Cambridge\\
  Cambridge, United Kingdom \\
  \texttt{danilo.alvares@mrc-bsu.cam.ac.uk} \\
  \And
 Vera L. Tomazella \\
  Department of Statistics\\
  Federal University of S{\~a}o Carlos\\
  S{\~a}o Carlos, Brazil \\
  \texttt{vera@ufscar.br}
}

\begin{document}
\maketitle
\begin{abstract}
In recent medical studies, the combination of longitudinal measurements with time-to-event data has increased the demand for more sophisticated models without unbiased estimates. Joint models for longitudinal and survival data have been developed to address such problems. One complex issue that may arise in the clinical trials is the presence of individuals who are statistically immune to the event of interest, those who may not experience the event even after extended follow-up periods. So far, the literature has addressed joint modeling with the presence of cured individuals mainly through mixture models for cure fraction and their extensions. In this study, we propose a joint modeling framework that accommodates the existence or absence of a cure fraction in an integrated way, using the defective Gompertz distribution. Our aim is to provide a more parsimonious alternative within an estimation process that involves a parameter vector with multiple components. Parameter estimation is performed using Bayesian inference via the efficient integrated nested Laplace approximation algorithm, by formulating the model as a latent Gaussian model. A simulation study is conducted to evaluate the frequentist properties of the proposed method under low-information prior settings. The model is further illustrated using a publicly available, yet underexplored, dataset on antiepileptic drug failure, where quality-of-life scores serve as longitudinal biomarkers. This application allows us to estimate the proportion of patients achieving seizure control under both traditional and modern antiepileptic therapies, demonstrating the model's ability to assess and compare long-term treatment effectiveness within a clinical trial context.
\end{abstract}
\keywords{Bayesian inference \and Cure fraction \and Defective models \and INLA \and Survival analysis}

\section{Introduction}

In many clinical trials, the time to death from recurrence of a disease is utterly important to develop new treatments to human health. Mathematically, let $T \geq 0$  be a non-negative continuous random variable representing the time until the occurrence of an event of interest. In this context, a fundamental tool is the survival function, which expresses the probability that an individual survives beyond a given time $t$, defined as
$$S(t) = \mathbb{P}(T > t) = \int_{t}^{+\infty} f(u)\, du = 1 - F(t),$$
\noindent where $f(t)$ and $F(t)$ denote the density and cumulative distribution functions of $T$, respectively.

In conventional survival analysis, it is commonly assumed that the survival function $S(t)$ decreases monotonically over time and converges to zero as $t \rightarrow \infty$, implying that all individuals will eventually experience the event of interest \citep{kleinbaum1996survival}. However, advances in medical treatments have significantly improved the prognosis and long-term outcomes for patients with various chronic and life-threatening conditions, such as cancer and HIV/AIDS \citep{campos2009quality}, cardiovascular diseases \citep{croog1986effects, lamas1998quality}, and even in reducing or eliminating seizures in patients undergoing epilepsy treatment \citep{birbeck2002seizure}. In these real-world cases, a subset of individuals may never experience the event, even if the follow-up period is exhaustively extended. These individuals are referred to as (statistically) \textit{cured} or \textit{immune}. Consequently, the survival function does not converge to zero but rather to a real value in the interval $(0, 1)$, becoming improper. This phenomenon has led to the development of a specific class of models known as \textit{cure models} or \textit{long-term survival models}, which explicitly account for the cured subpopulation.

Several models have been developed to accommodate the presence of a cure fraction in survival data. One of the most widely used frameworks is the \textit{mixture model} for cure fraction, originally proposed by \cite{boag1949curefraction} and later mentioned by \cite{berkson1952survival}. In this formulation, the survival function of the entire population is expressed as $S_{\text{P}}(t) = p + (1 - p)S(t),$ where $p$  denotes the proportion of cured individuals in the population, and $S(t)$ is a proper parametric survival function, typically modeled using standard distributions such as Weibull, log-normal, or log-logistic. Although this approach is conceptually appealing and interpretable, it introduces an additional parameter ($p$) into the estimation process, which can increase the complexity of inference, particularly when the model involves a high-dimensional parameter space.

Defective models constitute a class of survival distributions particularly advantageous for long-term survival analysis. These models inherently account for the presence of statistically cured individuals by embedding the cure fraction directly into the survival function, thereby eliminating the need to specify an additional parameter for modeling the cured proportion. The methodology is based on parametric survival functions whose limiting behavior determines the presence of a cure fraction, i.e., if the survival function does not converge to zero as $t \rightarrow \infty$, and instead approaches a value in the interval $(0,1)$, the model is considered defective, and the asymptotic value corresponds to the estimated cure proportion. This framework reduces the dimensionality of the parameter space, enhances interpretability, and simplifies estimation, particularly in small-sample scenarios, in comparison to mixture models. Defective models thereby offer a robust, flexible, and unified approach for analyzing survival data with a cure fraction, as demonstrated in several studies \citep{rocha2016two, santos2017bayesian, scudilio2019defective, calsavara2019defective, calsavara2019zero}.

More refined clinical trials have introduced the follow-up of patients through longitudinal biomarker measurements to assess the correlation between these repeated measures and the risk of death or recurrence of a specific clinical event. These longitudinal variables, often referred to as endogenous biomarkers, represent internal factors influenced by the occurrence of the event of interest. In this context, joint models have emerged as a useful class of statistical methods capable of linking survival outcomes with longitudinal data for each individual. Since the 2000s, a growing number of studies have been proposed to address estimation challenges in joint models \citep{chi2006joint, rizopoulos2011dynamic, andrinopoulou2014joint, niekerk2021competing,  alvares2021tractable, rustand2024fast}. However, most existing works assume that all individuals in the study population will eventually experience the event of interest, without accounting for the possibility of a subpopulation that is statistically cured. In the literature, a few important exceptions that incorporate cure fractions in joint modeling can be mentioned: one of the earliest formulations of cure rate models incorporating longitudinal biomarker measurements can be found in the works of \cite{law2002joint} and \cite{yu2004joint}. Later, \cite{yu2008individual} introduced a Bayesian framework for dynamic prediction in mixture populations, assuming a generalized Weibull distribution for the uncured individuals. \cite{bakar2009bayesian}, inspired by the promotion time framework, developed a joint model that integrates a dynamic cure rate mechanism with a semi-parametric specification for the survival function. This approach includes a Poisson-based link structure that incorporates both longitudinal measurements and covariates into the hazard function. \cite{kim2013joint} presented a joint promotion time cure model based on a novel class of transformed promotion time models, aiming to explain the long-term plateau observed at the tail of survival curves. \cite{martins2017joint} developed a fully parametric joint model based on the standard mixture cure model framework to model HIV/AIDS clinical data in Brazil considering spatial effects. \cite{alvares2025} combined a nonlinear mixed-effects location scale approach with a interval-censoring cure-survival specification to model pregnancy miscarriage data.



Despite significant advances in joint modeling, existing frameworks have not yet integrated cure fraction modeling using the parsimonious structure of defective distributions. In this context, the aim of this study is to introduce a novel joint model that accounts for long-term survivors through a baseline hazard specification using the defective Gompertz distribution. The defective Gompertz distribution has the advantage of being an integrated, flexible, and robust approach. It naturally accommodates the presence of a cure fraction when immuned subpopulation exists, while reverting to the standard Gompertz distribution in the absence of long-term survivors, yielding a strictly decreasing survival function over time. In contrast, traditional cure rate models may suffer from convergence issues, particularly when the survival curve exhibits a high plateau, indicating a substantial proportion of cured individuals \citep{andersson2011estimating}. An additional limitation arises when modeling the cure probability through a regression structure: as the linear predictor tends toward $-\infty$ (i.e., covariates indicating a low probability of cure), the corresponding logistic transformation can approach zero, potentially leading to numerical instability and convergence failures in estimation procedures.

Additional developments of this work are ($i$) the complete formulation of defective joint modeling with multiple longitudinal biomarkers; ($ii$) a subject-specific interpretation of the cure fraction based on estimated model parameters and random effects; and ($iii$) Bayesian inference through the integrated nested Laplace approximation (INLA) by expressing the model as a latent Gaussian model, offering computational efficiency and scalability. 

The aim of this study was primarily motivated by a dataset available in the \texttt{joineRML} R package, which includes a subset of a clinical trial on antiepileptic treatment failure. In this dataset, the Gompertz distribution takes a defective form in survival data, suggesting the presence of individuals who are statistically cured of epileptic seizures. Although publicly accessible, this dataset has been largely unexplored in the context of assessing the joint association between multiple biomarkers and the risk of treatment failure. This application enables the interpretation of estimated parameters for both endogenous and exogenous covariates, as well as the individualized estimation of cure probabilities.

The remainder of the paper is organized as follows. In Section~\ref{sec:method}, the novel proposal is introduced in detail. Section~\ref{sec:simul} presents a Monte Carlo simulation study to assess frequentist properties under vague prior specifications. Section~\ref{sec:application} contains the application to real-world dataset. Finally, Section~\ref{sec:concl} concludes the paper and outlines directions for future research.

\section{Methodology} \label{sec:method}

\subsection{The defective Gompertz distribution}

The Gompertz distribution is used to model survival data when the hazard has an exponential behavior. The probability density function is described as
\begin{equation}
        f(t \, ; \alpha, \mu) = \mu\exp(\alpha t) \exp\left(-\frac{\mu}{\alpha}(\exp( \alpha t) - 1)\right),
        \label{eq:den_Gompertz}
\end{equation}

\noindent for $\alpha>0$, $\mu>0$ and $t>0$. In this parametrization, $\alpha$ and $\mu$ are the shape and scale parameters, respectively. The corresponding survival function is
\begin{equation}
    S(t \, ; \alpha, \mu) = \exp\left(-\frac{\mu}{\alpha}(\exp( \alpha t) - 1)\right).
    \label{eq:surv_Gompertz}
\end{equation}

The Gompertz distribution becomes defective when the parameter $\alpha$ is negative, indicating the presence of a cure fraction in the data. The proportion of immunity in the population is calculated as the limit of the survival function when $\alpha<0$:
\begin{equation}
 p = \lim_{t \rightarrow \infty} S(t \, ; \alpha, \mu) = \lim_{t \rightarrow \infty} \exp\left(-\frac{\mu}{\alpha}(\exp(\alpha t) - 1)\right) = \exp\left(\frac{\mu}{\alpha}\right) \in (0,1).
\end{equation}
 
\subsection{The joint model for multivariate longitudinal count and survival data}

Let $Y_{ijk}$ denote the longitudinal count outcome for the $i$-th individual ($i = 1, \ldots, n$) at time point $t_{ijk}$, with $j$ indexing the measurement occasion ($j = 1, \ldots, n_{ik}$) and $k$ indexing the biomarker ($k = 1, \ldots, K$). The joint association of multivariate longitudinal and survival data requires the structure of mixed-effects models to describe the biomarker's evolution over time with the specification of a link function.  In our approach, the longitudinal outcomes are assumed to follow a Poisson distribution, with the conditional mean modeled through a log-link function incorporating both fixed and random effects.
\begin{equation}
\label{eq:poisson_model}
\begin{aligned}
    (Y_{ijk} \mid \bm{b}_{ik}) &\sim \text{Poisson}(\lambda_{ijk}) \\
    \eta_{ijk}^{L} &= \log(\lambda_{ijk}) = \bm{x}_{ijk}^\top \bm{\beta}_k + \bm{z}_{ijk}^\top \bm{b}_{ik} \\
    (\bm{b}_{i} \mid {\bm\Sigma}) &\sim \mathcal{N}(\bm{0}, {\bm\Sigma}),
\end{aligned}
\end{equation}

\noindent where $\lambda_{ijk}$ is the conditional mean of the count result, $\bm{x}_{ijk}$ and $\bm{z}_{ijk}$  are covariate vectors for fixed and random effects, respectively;  $\bm{\beta}_k$ is a vector of fixed effects to the $k$-th biomarker; and $\bm{b}_{i} = \{\bm{b}_{ik};\ k = 1, \dots, K\}$ denote the vector of subject-specific random effects for individual $i$, assumed to follow a multivariate normal distribution with mean zero and variance-covariance matrix ${\bm\Sigma}$. 


For the survival sub-model, let $T_i$ be a non-negative random variable representing the true event time for the $i$-th individual, and let $C_i$ denote the censoring time. The observed survival time is then defined as $T^*_i = \min(T_i, C_i)$, and the corresponding censoring indicator is $\delta_i = \mathbb{I}(T_i \leq C_i)$. Thus, each individual is characterized by the pair $(T^*_i, \delta_i)$. The random effects from the longitudinal sub-model \eqref{eq:poisson_model} are incorporated into the hazard function to account for the shared dependence between the longitudinal and survival processes:
\begin{equation}
    \label{eq:join_model_proposed}
    h_{i}(t) = h_{0}(t) \, \exp(\eta_{i}^{S}) = h_{0}(t) \, \exp\left(\bm{w}_{i}^{\top} \bm{\psi} + \sum_{k=1}^{K}\bm{b}_{ik}^{\top} \bm{\gamma}_{k}\right), 
\end{equation}

\noindent where $h_{0}(t)$ is a baseline hazard function, $\bm{w}_{i}$ is a vector of baseline covariates quantifying the fixed effects with regression coefficients $\bm{\psi}$;  $\bm{b}_{ik}$  is the vector of subject-specific random effects corresponding to each of the $K$ biomarkers with association parameters $\bm{\gamma} = (\bm{\gamma}_1, \ldots, \bm{\gamma}_K)^\top$ linking the random effects to the survival outcome. This formulation jointly accounts for the longitudinal and survival processes across all biomarkers, for each individual $i = 1, \ldots, n$, and each biomarker $k = 1, \ldots, K$.

In many applications, the baseline hazard function is modeled nonparametrically to reduce the risk of misspecification. However, when exploratory data analyses suggest the presence of a cured subpopulation, it becomes important to explicitly account for this feature in the survival sub-model, especially when the goal is to understand how biomarkers influence the probability of cure. Ignoring the cure fraction may result in biased inference, particularly when the proportion of cured individuals is substantial and clinically relevant.  The class of defective distributions has been an interesting approach to model long-term survival data without the need of extra parameters in the statistical modeling, especially if the proportion of cured individuals matters. Defective distributions provide a natural and interpretable approach that allows for the presence (or absence) of a cure fraction to be accommodated within a unified framework, without requiring additional complexity in the estimation process.

Due to its great flexibility, we assume the Gompertz baseline hazard function in our joint model, i.e., $h_{0}(t) = \mu \,\exp(\alpha t)$. Considering a reparametrized version for the scale parameter $\mu = \exp(\gamma_0)$, for $\gamma_{0} \in \mathbb{R}$, the hazard function for the $i$-th individual is specified as
\begin{equation}
    \label{eq:hi(t)}
    h_{i}(t) = \exp(\alpha t + \gamma_{0} + \eta_{i}^{S}).
\end{equation}


The cumulative hazard function based on the hazard function in Equation \eqref{eq:hi(t)} is given by
\begin{align}
    H_{i}(t) &= \int_{0}^{t} h_{i}(u) \, du = \int_{0}^{t} \exp(\alpha u + \gamma_{0} + \eta_{i}^{S})  \, du =  \int_{0}^{t} \exp(\alpha u) \exp(\gamma_{0} + \eta_{i}^{S})  \, du \nonumber\\
            & = \exp(\gamma_{0} + \eta_{i}^{S}) \int_{0}^{t} \exp(\alpha u)   \, du = \exp(\gamma_{0} + \eta_{i}^{S}) \dfrac{1}{\alpha} \exp(\alpha u)\Big|_{0}^{t} =  \exp(\gamma_{0} + \eta_{i}^{S}) \dfrac{1}{\alpha} \left[\exp(\alpha t) - 1\right].
\end{align}

The cumulative hazard function plays a key role in deriving the survival function under the influence of random effects, through the relationship $S_{i}(t) = \exp\{-H_{i}(t)\}$. It is essential to assess how the random effects impact the individual survival probabilities and, consequently, the cure fraction for each individual. \\

\begin{theorem}
    \label{teo1}
    If $S_{0}(t)$ is defective, then $S_{i}(t)$ is also defective.
\end{theorem}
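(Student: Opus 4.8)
The plan is to exploit the proportional-hazards structure of the model. Since the linear predictor $\eta_{i}^{S} = \bm{w}_{i}^{\top}\bm{\psi} + \sum_{k=1}^{K}\bm{b}_{ik}^{\top}\bm{\gamma}_{k}$ is built only from baseline covariates and subject-specific random effects, it does not depend on $t$; hence, conditional on $\bm{b}_{i}$, the individual cumulative hazard factorizes. Starting from Equation~\eqref{eq:hi(t)}, the baseline cumulative hazard (i.e.\ with $\eta_{i}^{S}=0$) is $H_{0}(t) = \exp(\gamma_{0})\,\alpha^{-1}\left[\exp(\alpha t)-1\right]$, and the derivation already carried out in Section~\ref{sec:method} gives $H_{i}(t) = \exp(\eta_{i}^{S})\,H_{0}(t)$. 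Applying $S_{i}(t) = \exp\{-H_{i}(t)\}$ and $S_{0}(t) = \exp\{-H_{0}(t)\}$ then yields the single structural identity the argument needs,
\begin{equation*}
    S_{i}(t) = \left[S_{0}(t)\right]^{\exp(\eta_{i}^{S})}.
\end{equation*}

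First I would record what ``$S_{0}$ is defective'' means in this parametrization: by the limit computation in Section~\ref{sec:method}, this holds precisely when $\alpha<0$, in which case $\lim_{t\to\infty}S_{0}(t) = \exp(\mu/\alpha) =: p \in (0,1)$, with $\mu=\exp(\gamma_{0})$. Next I would pass to the limit in the identity above: writing $c := \exp(\eta_{i}^{S})$, which is a fixed positive constant, and using that $x\mapsto x^{c}$ is continuous on $[0,1]$, one may interchange the limit with the power to obtain $\lim_{t\to\infty}S_{i}(t) = p^{\,c} = p^{\exp(\eta_{i}^{S})}$. Finally, since $0<p<1$ and $0<\exp(\eta_{i}^{S})<\infty$, it follows that $p^{\exp(\eta_{i}^{S})}\in(0,1)$, so $S_{i}$ does not converge to zero and is therefore defective, with subject-specific cure fraction $p_{i} = p^{\exp(\eta_{i}^{S})}$.

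There is essentially no deep obstacle here; the result is a one-line consequence of the proportional-hazards factorization. The only points requiring care are (i) verifying that $\eta_{i}^{S}$ is genuinely constant in $t$, so that the factorization $H_{i}(t)=\exp(\eta_{i}^{S})H_{0}(t)$ is legitimate --- immediate from the specification \eqref{eq:join_model_proposed}, since neither $\bm{w}_{i}$ nor $\bm{b}_{ik}$ varies with time --- and (ii) confirming that $\exp(\eta_{i}^{S})$ is strictly positive and finite, which is exactly what keeps the limiting value strictly inside $(0,1)$ rather than collapsing to $0$ or $1$. I would also remark in passing that the same identity shows the complementary fact that $S_{i}$ is proper whenever $\alpha\geq 0$, so defectiveness is inherited in both directions and the individual cure fraction $p_{i}$ is simply a monotone transformation of the population-level value $p$.
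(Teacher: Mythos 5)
Your proof is correct and takes essentially the same route as the paper's: both arguments reduce to the limit of the Gompertz cumulative hazard as $t\to\infty$ under $\alpha<0$, and your limiting value $p^{\exp(\eta_{i}^{S})}$ coincides with the paper's $\exp\left\{\exp(\gamma_{0}+\eta_{i}^{S})/\alpha\right\}$. The only difference is cosmetic --- you package the computation through the proportional-hazards identity $S_{i}(t)=\left[S_{0}(t)\right]^{\exp(\eta_{i}^{S})}$, which incidentally makes the argument independent of the Gompertz form, whereas the paper substitutes the explicit cumulative hazard directly.
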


\begin{proof}
    Supose $\alpha < 0$. Then
    \begin{align}
    \label{theo1}
    \lim_{t \rightarrow \infty} S_{i}(t) &=  \lim_{t \rightarrow \infty} \exp\left\{- H_{i}(t)\right\} = \lim_{t \rightarrow \infty} \exp\left\{- \exp(\gamma_{0} + \eta_{i}^{S}) \dfrac{1}{\alpha} \left[\exp(\alpha t) - 1\right]\right\} \nonumber \\
    &= \exp\left\{- \exp(\gamma_{0} + \eta_{i}^{S}) \dfrac{1}{\alpha} \lim_{t \rightarrow \infty}  \left[\exp(\alpha t) - 1\right]\right\} =  \exp\left\{\exp(\gamma_{0} + \eta_{i}^{S}) \dfrac{1}{\alpha} \right\}.
\end{align}
\end{proof}

Since $\exp(\gamma_{0} + \eta_{i}^{S}) > 0$, it follows directly that Equation~\eqref{theo1} lies within the interval $(0,1)$. This completes the proof and demonstrates that the survival sub-model specified in \eqref{eq:join_model_proposed} can be interpreted as added random effects on the scale parameter. In other words, the latent intercepts and slopes contribute to the cure fraction of each individual through this parameter.

\subsection{Bayesian inference}

In the Bayesian paradigm for statistical inference, the use of prior distributions serves to express the knowledge (or ignorance) about the vector of parameters and its combination with likelihood function provides the posterior distribution, which is the mathematical object required to proceed with the inference \citep{murteira2018bayesiana}. Assuming $\bm{\theta}_{i} = (\bm{b}_{i}, \bm{\vartheta}) $, where $\bm{\vartheta} = (\bm{\beta}, \bm{\gamma}_{k}, \Sigma_{k}, \alpha, \gamma_{0})$ is the vector of fixed effects, $\bm{b}_{i} = (\bm{b}_{i1}, \dots, \bm{b}_{iK})$ is the vector of random effects and $D_{i} = \{T_{i}^{*}, \delta_{i}, Y_{ijk}; j = 1, \ldots, n_{i}; k = 1, \ldots, K\}$ is the data with the observed variables, $i=1,\dots,n$. In this way, the full vector of parameters is $\bm{\theta} = \left\{\bm{\theta}_{i}, i = 1, \ldots, n\right\}$ and the complete dataset is $\bm{D} =  \left\{D_{i}, i = 1, \ldots, n\right\}$. The posterior distribution is given by the Bayes theorem
$$\pi(\bm{\theta} \mid \bm{D}) = \dfrac{\pi (\bm{D} \mid \bm{\theta}) \,\pi(\bm{\bm{\theta}}) }{\pi (\bm{D})} \varpropto \pi (\bm{D} \mid \bm{\theta}) \,\pi(\bm{\theta}),$$

\noindent where $\pi (\bm{D} \mid \bm{\theta})$ is the likelihood function, $\pi(\bm{\theta})$ is the prior distribution and $\pi (\bm{D})$ is the distribution of data unconditioned on parameters. The quantity $\pi (\bm{D}) = \int_{\bm{\theta}} \pi (\bm{D} \mid \bm{\theta}) \,\pi(\bm{\theta}) \, d\bm{\theta}$, also known as a normalizing constant, is analytically untractable in most cases. Bayesian inference traditionally relies on sampling-based algorithms, such as Markov chain Monte Carlo (MCMC), to estimate summary statistics from the posterior distribution, marginal densities and predictive distributions. However, approximate methods such as the integrated nested Laplace approximation (INLA) offer a computationally efficient alternative by providing accurate approximations of posterior densities at substantially lower computational cost compared to MCMC.


The full likelihood function is derived in terms of the advantage of conditional independence between the biomarker $Y_{ijk}$ and the event time $T^{*}_{i}$,
\begin{equation}
    \pi(D_{i} \mid \bm{\theta}_{i}) = \int_{\bm{b}_{i}}  
    \left[
        \pi(T_{i}^{*}, \delta_{i} \mid \bm{\theta}_{i}) 
        \, \displaystyle\prod_{k=1}^{K} \displaystyle\prod_{j=1}^{n_{i}} 
        \pi(Y_{ijk} \mid \bm{\theta}_{i}) \, 
        \pi(\bm{b}_{i})
    \right] 
    \, d \bm{b}_{i},
\end{equation}

\noindent where $\pi(T_{i}^{*}, \delta_{i} \mid \bm{\theta}_{i}) = h_{i}(T_{i}^{*} \mid \bm{\theta}_{i})^{\delta_{i}} \, \exp \left(- \int_{0}^{T_{i}^{*}} h_{i}(T_{i}^{*} \mid \bm{\theta}_{i})\right)$ corresponds to the likelihood contribution from the survival component, based on the hazard function presented in Equation \eqref{eq:hi(t)}; $\pi(Y_{ijk} \mid \bm{\theta}_{i}) = \dfrac{\lambda_{ijk}^{Y_{ijk}} \, \exp(-\lambda_{ijk})}{Y_{ijk} !}$ is the probability mass function for the $k$-th longitudinal sub-model; and $ \pi(\bm{b}_{i})$ is the density of the random effects.


\subsection{Integrated nested Laplace approximation}

Joint models incorporating Gaussian random effects can be formulated within the framework of latent Gaussian models, thereby enabling statistical inference through the integrated nested Laplace approximation (INLA) method \citep{alvares2024, rustand2024joint}. The joint model needs to be expressed with a hierarchical structure with three layers: ($i$) The likelihood function of the observed data is assumed to be conditionally independent; ($ii$) the latent field $\chi$ defined by a multivariate Gaussian distribution conditioned on a set of hyperparameters and; ($iii$) a prior distribution elicited with a set of hyperparameters.

The first step is to formulate the defective joint model as a latent Gaussian model. We decompose the full parameter vector ($\bm{\theta}$) into a Gaussian latent field $\chi = (\bm{\eta}^{L}, \bm{\eta}^{S}, \bm{b}, \bm{\beta}, \bm{\psi}, \bm{\gamma}, \gamma_{0})$ and the set of hyperparameters $\bm{\varphi}$ containing the likelihood-specific
parameters, such as $\alpha$ and variance and covariance of random effects; where $\bm{\eta}^{L}= \{\eta_{ijk}^{L}, i = 1, \dots, n; j = 1, \dots, n_{i}; k = 1, \dots, K\}$ are the linear predictors for the longitudinal Poisson outcomes $Y_{ijk}$; $\bm{\eta}^{S} = \{\eta_{i}^{S}, i = 1, \dots, n\}$ are the linear predictors for the Gompertz baseline hazard; $\bm{b}$ are the shared subject-specific random effects across the longitudinal and survival models; $\bm{\beta}$, $\bm{\psi}$ and $\bm{\gamma}$ are fixed effect coefficients for the longitudinal, survival and association structures, respectively; $\alpha$ and $\gamma_{0}$ are the parameters of the Gompertz baseline hazard, i.e., $h_{0}(t) = \exp(\alpha t + \gamma_{0})$.

We assume the random effects $\bm{b}_{i} \sim \mathcal{N}(\bm{0}, \bm{Q}^{-1}_{b})$, where $\bm{Q}_{b}$ is the precision matrix associated with the shared Gaussian random effects, containing the precision and correlation hyperparameters $\bm{\tau}_{\bm{b}}$. Additionally, 
\begin{align*}
    \bm{\beta} &\sim \mathcal{N}(\bm{0}, \bm{\tau}_{\bm{\bm{\beta}}}^{-1} I), \\
    \bm{\psi} &\sim \mathcal{N}(\bm{0}, \bm{\tau}_{\bm{\psi}}^{-1} I), \\
    \bm{\gamma} &\sim \mathcal{N}(\bm{0}, \bm{\tau}_{\bm{\gamma}}^{-1} I), \\
    \gamma_{0} &\sim \text{Normal}(0, 1/\tau_{\gamma_{0}}), \\
    \alpha &\sim \text{Normal}(0, 1/\tau_{\alpha}).
\end{align*}

Then, the latent field $\chi$ forms a Gaussian markov random field with sparse precision matrix $\bm{Q}(\bm{\varphi}_{1})$, i.e., $$(\chi \mid \bm{\varphi}) \sim \mathcal{N}(\bm{0}, \bm{Q}^{-1}(\bm{\varphi})).$$

In the Bayesian formulation, a prior can be defined fot the set $\bm{\varphi}$, $\pi(\bm{\varphi})$. We assume the distribution of the observation variables $\bm{D} = \{D_{i}; i =1, \dots, n\}$ is conditionally independent given the Gaussian random field $\chi$ and the hyperparameters $\bm{\varphi}$, i.e., $$(\bm{D} \mid \chi, \bm{\varphi}) \sim \prod_{i=1}^{n} \pi(D_{i} \mid \chi_{i}, \bm{\varphi}).$$

In this way, the joint posterior distribution of $(\chi, \bm{\varphi})$ can be written as
\begin{align*}
    \pi(\chi, \bm{\varphi} \mid \bm{D}) & \varpropto \pi(\bm{\varphi}) \,  \pi(\chi \mid \bm{\varphi})  \, \prod_{i=1}^{n} \pi(D_{i} \mid \chi_{i}, \bm{\varphi}) \\
        &\varpropto  \pi(\bm{\varphi}) \, | \bm{Q}(\bm{\varphi})|^{n/2} \, \exp \left\{\frac{1}{2} \, \chi^{\top} \bm{Q}(\bm{\varphi}) \, \chi + \sum_{i=1}^{n} \log\left[\pi(D_{i} \mid \chi_{i}, \bm{\varphi})  \right]\right\}.
\end{align*}


The main goal of INLA is to use subsequent approximations for marginal posterior densities $\pi(\chi_{i} \mid \bm{D}), i = 1, \dots, n$, and the joint posterior density $\pi(\chi, \bm{\varphi} \mid \bm{D})$. 


We proceed with a brief explanation of how the INLA handles the approximation task into three steps:

\begin{enumerate}
    \item Approximate the marginal posterior density distribution of the hyperparameters using the Laplace approximation:

    $$
    \pi(\bm{\varphi} \mid  \bm{D}) = \frac{\pi(\chi, \bm{\varphi} \mid  \bm{D})}{\pi(\chi \mid \bm{\varphi} ,  \bm{D})} \approx \frac{\pi(\bm{\varphi}) \pi(\chi \mid \bm{\varphi}) \pi(\chi \mid  \bm{D}, \bm{\varphi})}{\tilde{\pi}(\chi \mid \bm{\varphi} ,  \bm{D})}  \Big{|}_{\chi = \chi^{*}(\bm{\varphi})}.  
    $$

\noindent In the above step, the goal is to propose an approximation to the denominator based on the mode $\chi^{*}(\bm{\varphi})$ of the latent field for a given configuration of $\bm{\varphi}$ using the Laplace approximation.

    \item Approximate the conditional posterior distribution of the latent field 

    $$
    \pi(\chi_{j} \mid \bm{\varphi},  \bm{D}) \varpropto \dfrac{\pi( \chi, \bm{\varphi} \mid  \bm{D})}{\pi(\chi_{-j} \mid \chi_{j}, \bm{\varphi},  \bm{D})}.
    $$

    There are three options to conduct this approximation: (a) using a Gaussian approximation \citep{rue2009approximate}, (b) propose again a Laplace approximation \citep{rue2009approximate} or (c) by expanding the numerator and denominator up to a third order Taylor expansion and then apply the Laplace  approximation.

    \item Approximate the marginal posterior distributions of the latent field using numerical integration,

    $$
    \pi(\chi_{j} \mid  \bm{D}) \approx \sum_{h=1}^{H} \tilde{\pi}(\chi_{j} \mid \bm{\varphi}_{h}^{*},  \bm{D}) \, \tilde{\pi}( c \mid \bm{D}) \, \Delta h,
    $$

    from steps 1 and 2. The integration points $\left\{ \bm{\varphi}_{1}, \dots, \bm{\varphi}_{H} \right\}$ are selected through a rotation using polar coordinates, taking into account the density distribution of these points. The quantity $\Delta h$ are the corresponding weights. 
\end{enumerate}

In the explanation above, the approximation of the posterior marginal distributions for each element of the latent field and for each hyperparameter, achieved through numerical integration, constitutes the ``integrated'' component of the INLA algorithm, whereas the first two steps correspond to the ``nested Laplace'' approximation component. Recent advances in INLA have introduced a low-rank variational Bayes correction that aligns the approximation with the full Laplace method with no additional computational cost compared to the first Gaussian approximation \citep{van2023new}. This new formulation is available in \texttt{INLAjoint} R-package \citep{rustand2024fast, alvares2024, rustand2024joint}.

\section{Simulation studies} \label{sec:simul}

In our simulation studies, we investigated three sample sizes ($n= 100, 500, 1000$), each subject has four longitudinal records (one at baseline and thereafter three visits), they are scheduled at equally spaced time interval $t_{ij} = (0, 0.3, 0.6, 0.9)$. After this period, subjects are said to be censored non-informatively. The true conditional mean to a longitudinal profile follows the following structure:
\begin{equation}
    \label{eq:sim_long_submodel}
    \lambda_{i}(t_{ij}) = \beta_{0i} + b_{0i} + (\beta_{1} + b_{1i}) \, t_{ij} + \beta_2 x_{1i} +  \beta_3 x_{2i} ,
\end{equation}

\noindent where $x_{1i} \sim \text{Bernoulli}(0.8)$ and $x_{2i} \sim \text{Uniform}(0,1)$ are fixed covariates with $\beta_2$ and $\beta_3$ as coefficients; $\beta_{0}$ is the fixed intercept and $\beta_{1}$ is the fixed time-effect; the random effects $(b_{0i}, b_{1i})$ are generated from a mean-zero bivariate Normal distribution with variances $\sigma_{b_{0}}^2$ and $\sigma_{b_{1}}^2$, respectively, and correlation $\rho$. The longitudinal responses are generated based on the conditional mean, i.e., $Y_{ij} \sim \text{Poisson}(\lambda_{i}(t_{ij}))$. For clarity and simplicity, our analysis focuses on a single longitudinal biomarker. Nonetheless, extending the model to handle multiple biomarkers is straightforward, as illustrated in the application in Section~\ref{sec:application}.
 
For the survival sub-model, we define the conditional hazard function as
\begin{equation}
    \label{eq:sim_surv_submodel}
    h_{i}(t \mid x_{1i}, b_{0i}, b_{1i}) = \exp(\alpha t  + \gamma_{0} + x_{1i}  \psi_{1} +  b_{0i} \gamma_{01} + b_{1i} \gamma_{11}),
\end{equation}

\noindent where we share the individual random effects (intercept and slope).

Our aim is to simulate failure times ($T_{i}^{*}$, $i = 1, \dots,n$) conditioned on the cure probability and the latent effects from the longitudinal process, as captured by the conditional hazard function. To achieve this, we proceed with the following steps:

\begin{enumerate}
    \item Generate the value of cure fraction  obtained from Theorem \ref{teo1}
    
    $$p_{i} = \exp\left\{ \dfrac{1}{\alpha} \, \exp(\gamma_{0} + x_{1i}  \psi_{1} +  b_{0i} \gamma_{01} + b_{1i} \gamma_{11}) \right\};$$
    
    \item Generate $M_{i} \sim \text{Bernoulli}(1 - p_{i})$;
    \item If $M_{i}=0$ set $t_{i} = \infty$. If $M_{i}=1$ take $t_{i}^{'}$ as the root of $F(t) = u$ given by 

    $$
    t_{i} = \frac{1}{\alpha} \, \log\left(1 - \frac{\alpha \, \log(1-u)}{\exp(\gamma_{0} +  x_{1i}  \psi_{1} +  b_{0i} \gamma_{01} + b_{1i} \gamma_{11})}  \right),
    $$

    \noindent where $u \sim \text{Uniform}(0,1-p_{i})$;

    \item Generate $u^{*}_{i} \sim \text{Uniform}(0, \max(t_i))$, considering only the finite $t_{i}$;

    \item Calculate $t_i^{*} = \min(t_{i}, u^{*}_{i})$ and define the event indicator $\delta_{i} = 1$ if $t_{i} < u^{*}_{i}$, and $\delta_{i} = 0$ otherwise.
\end{enumerate}

We conducted two Monte Carlo experiments under vague prior specification. The first scenario (Table \ref{tab:experiment_sim1}) was designed to resemble the values observed for one of the biomarkers analyzed in the application (see Section~\ref{sec:application}). In this case, the samples exhibit a high percentage of censoring, which corresponds to a greater presence of a cure fraction. The second scenario (Table \ref{tab:experiment_sim2}) explores the asymptotic behavior of the Bayesian model under conditions of lower censoring and a smaller proportion of long-term survivors. 

The code developed for the simulation study and the application of the joint cure model was written in R, using the \texttt{INLAjoint} package \citep{alvares2024,rustand2024joint} and the likelihood function of the defective Gompertz distribution (\texttt{dgompertz}) available in \texttt{R-INLA} package. All codes containing simulation studies are available on \url{https://github.com/Dionisioneto/Code-for-papers/tree/main/new\_joint\_cure\_model\_dgompertz}.

The simulation studies aim to investigate the behavior of bias and coverage probability of the joint model under vague prior specifications across 1000 replicates. The frequentist properties were evaluated in both scenarios. Overall, the frequentist properties are evaluated as satisfactory across both scenarios. The credible intervals exhibit appropriate coverage probabilities, consistently capturing the true parameter values at the expected rates. As the sample size increases, the bias declines reasonably for all components of the parameter vector. The results presented in Tables \ref{tab:experiment_sim1} and \ref{tab:experiment_sim2} indicate that the proposed model performs well under both high and low censoring and cure rate scenarios.

\begin{table}[h]
\centering
\caption{The average bias and coverage probability (C.P.) for the 95\% credible intervals of 1000 Monte Carlo replicates of the first scenario of censoring (50–70\%).}
\small
\begin{tabular}{lcccccccccccc} \\ \hline
Parameter & $\alpha$ & $\gamma_{0}$ & $\gamma_{1}$ & $\gamma_{3}$ & $\psi_{1}$ & $\beta_{0}$ & $\beta_{1}$ & $\beta_{2}$ & $\beta_{3}$ & $\sigma_{b_{0}}$ & $\sigma_{b_{1}}$ & $\rho$ \\ \hline
True      & -0.65     & -0.68        & 0.68         & 0.17         & -0.37      & 2.50        & -0.20       & -0.01       & 0.10        & 0.25             & 0.25             & -0.05                  \\ \hline
\multicolumn{13}{c}{$n=100$} \\ \hline
Bias      & -0.032    & -0.070       & -0.296       & -0.040       & 0.030      & -0.003      & -0.002      & -0.001      & 0.001       & 0.021            & 0.047            & -0.068                 \\
C.P. (95\%)& 0.939     & 0.920        & 0.973        & 0.989        & 0.943      & 0.950       & 0.961       & 0.961       & 0.949       & 0.927            & 0.921            & 0.973                  \\ \hline
\multicolumn{13}{c}{$n=500$} \\ \hline
Bias      & 0.000     & -0.019       & -0.084       & 0.021        & 0.005      & -0.002      & -0.001      & 0.000       & 0.001       & 0.006            & 0.016            & -0.034                 \\
C.P. (95\%)& 0.954     & 0.928        & 0.961        & 0.963        & 0.956      & 0.943       & 0.944       & 0.944       & 0.943       & 0.930            & 0.957            & 0.943                  \\ \hline
\multicolumn{13}{c}{$n=1000$} \\ \hline
Bias      & -0.002    & -0.003       & -0.032       & 0.002        & 0.000      & 0.000       & 0.000       & 0.000       & 0.000       & 0.003            & 0.009            & -0.020                 \\
C.P. (95\%)& 0.949     & 0.921        & 0.954        & 0.962        & 0.939      & 0.950       & 0.945       & 0.956       & 0.953       & 0.949            & 0.943            & 0.945                  \\ \hline
\end{tabular}
\label{tab:experiment_sim1}
\end{table}

\begin{table}[h]
\centering
\caption{The average bias and coverage probability (C.P.) for the 95\% credible intervals of 1000 Monte Carlo replicates of the second scenario of censoring (10–30\%).}
\small
\begin{tabular}{lcccccccccccc} \\ \hline
Parameter & $\alpha$ & $\gamma_{0}$ & $\gamma_{1}$ & $\gamma_{3}$ & $\psi_{1}$ & $\beta_{0}$ & $\beta_{1}$ & $\beta_{2}$ & $\beta_{3}$ & $\sigma_{b_{0}}$ & $\sigma_{b_{1}}$ & $\rho$ \\ \hline
True      & -1.00     & 0.80         & 1.00         & 1.00         & 0.50       & 1.00        & -1.00       & -0.10       & 0.50        & 0.50             & 0.50             & 0.40                   \\ \hline
\multicolumn{13}{c}{$n=100$} \\ \hline
Bias       & -0.006    & -0.021       & 0.068        & -0.249       & 0.017      & -0.004      & 0.003       & -0.003      & 0.007       & -0.009           & 0.002            & -0.060                  \\
C.P. (95\%) & 0.951     & 0.927        & 0.982        & 0.969        & 0.951      & 0.945       & 0.934       & 0.942       & 0.951       & 0.962            & 0.996            & 0.985                   \\ \hline
\multicolumn{13}{c}{$n=500$} \\ \hline
Bias       & -0.011    & -0.006       & 0.036        & -0.106       & 0.005      & -0.002      & -0.001      & 0.005       & 0.000       & -0.004           & -0.004           & -0.002                  \\
C.P. (95\%) & 0.963     & 0.935        & 0.965        & 0.948        & 0.951      & 0.948       & 0.926       & 0.937       & 0.956       & 0.949            & 0.982            & 0.972                   \\ \hline
\multicolumn{13}{c}{$n=1000$} \\ \hline
Bias       & -0.012    & -0.003       & 0.024        & -0.069       & 0.006      & -0.002      & 0.000       & -0.001      & 0.001       & -0.004           & -0.005           & 0.007                   \\
C.P. (95\%) & 0.948     & 0.927        & 0.964        & 0.965        & 0.939      & 0.958       & 0.912       & 0.945       & 0.956       & 0.955            & 0.973            & 0.972                   \\ \hline
\end{tabular}
\label{tab:experiment_sim2}
\end{table}

\section{The Standard and New Antiepileptic Drugs (SANAD) study} \label{sec:application}

Epilepsy is a long-term neurological condition marked by repeated and unpredictable seizures, caused by abrupt and intense bursts of electrical activity in clusters of neurons across different brain regions. Identifying the disorder for clinical diagnostic involves assessing variables such as the location where seizures originate, the size of the brain area involved, and how often the seizures occur \citep{berto2002quality}. However, with advances in medical science, most individuals with epilepsy are able to lead stable lives without experiencing cognitive or psychiatric disorders, primarily through the use of medication or surgical interventions. A smaller proportion of patients, however, may develop difficulties with attention and memory in daily tasks, and in some cases, even psychotic symptoms.

The antiepileptic drugs can be divided into two categorias: the classical, such as carbamazepine (CBZ), phenobarbital (PHB), and phenytoin (PHT); and the modern ones such as gabapentin (GBP), levetiracetam (LEV), lacosamide (LCM), lamotrigine (LTG), pregabalin (PGN), topiramate (TPM), and zonisamide (ZNS), which are less likely to be involved in drug interactions based on enzyme induction \citep{berto2002quality}. The antiepileptic drugs represent the first-line treatment for seizure control. Evidence suggests that these medications have a positive impact on quality of life of patients by reducing the probability of recurrence of epileptic seizure \citep{fitton1995lamotrigine, birbeck2002seizure, shorvon2015drug}. 

Given the high proportion of patients who achieve a good quality of life with antiepileptic medication, it becomes reasonable to model the time to treatment failure (i.e., the occurrence of an epileptic seizure) within a long-term survivor framework. This approach acknowledges the existence of a subgroup of patients who are statistically seizure-free throughout pharmacological treatment-some of whom may eventually discontinue medication altogether due to their sustained low risk of seizure recurrence \citep{schmidt2009drug,saetre2010antiepileptic}.


Motivated by the prolonged time to treatment failure commonly observed in patients undergoing long-term antiepileptic drug therapy, we apply a joint modeling approach to data from a randomized controlled trial comparing standard and newer antiepileptic drugs in terms of their clinical effects and impact on quality of life in epilepsy treatment. Using the dataset available in the \texttt{joineRML} R-package \citep{hickey2018joinerml}, we accessed a sample from the SANAD (Standard and New Antiepileptic Drugs) study, as reported in \cite{marson2007randomised} and \cite{marson2007sanad}. The original aim of the study was to assess quality of life measures --including anxiety, depression, and clinical profiles related to adverse events-- using discrete scores obtained from questionnaires administered at four time points: baseline, 3 months, 1 year, and 2 years after the initial visit. In the observed sample, the number of longitudinal records per patient is distributed as follows: 5.51\% have one record, 11.95\% have two records, 19.12\% have three records, and 63.42\% have four records.

The dataset includes both longitudinal and survival data for 544 patients who were randomly assigned to receive one of two types of antiepileptic drugs: the standard treatment carbamazepine (CBZ) or the newer alternative lamotrigine (LTG). Accordingly, the primary focus of the study is to evaluate the time until treatment failure --defined as the occurrence of a seizure-- for each treatment group. Among the patients, we have 41.54\% of censoring. The Kaplan-Meier curve presented in Figure \ref{fig:surv_curve_cf_est} for each treatment group reveals the presence of a plateau that indicates the existence of long-term survivors.

\begin{figure}[h!]
    \centering
    \includegraphics[width=0.5\linewidth]{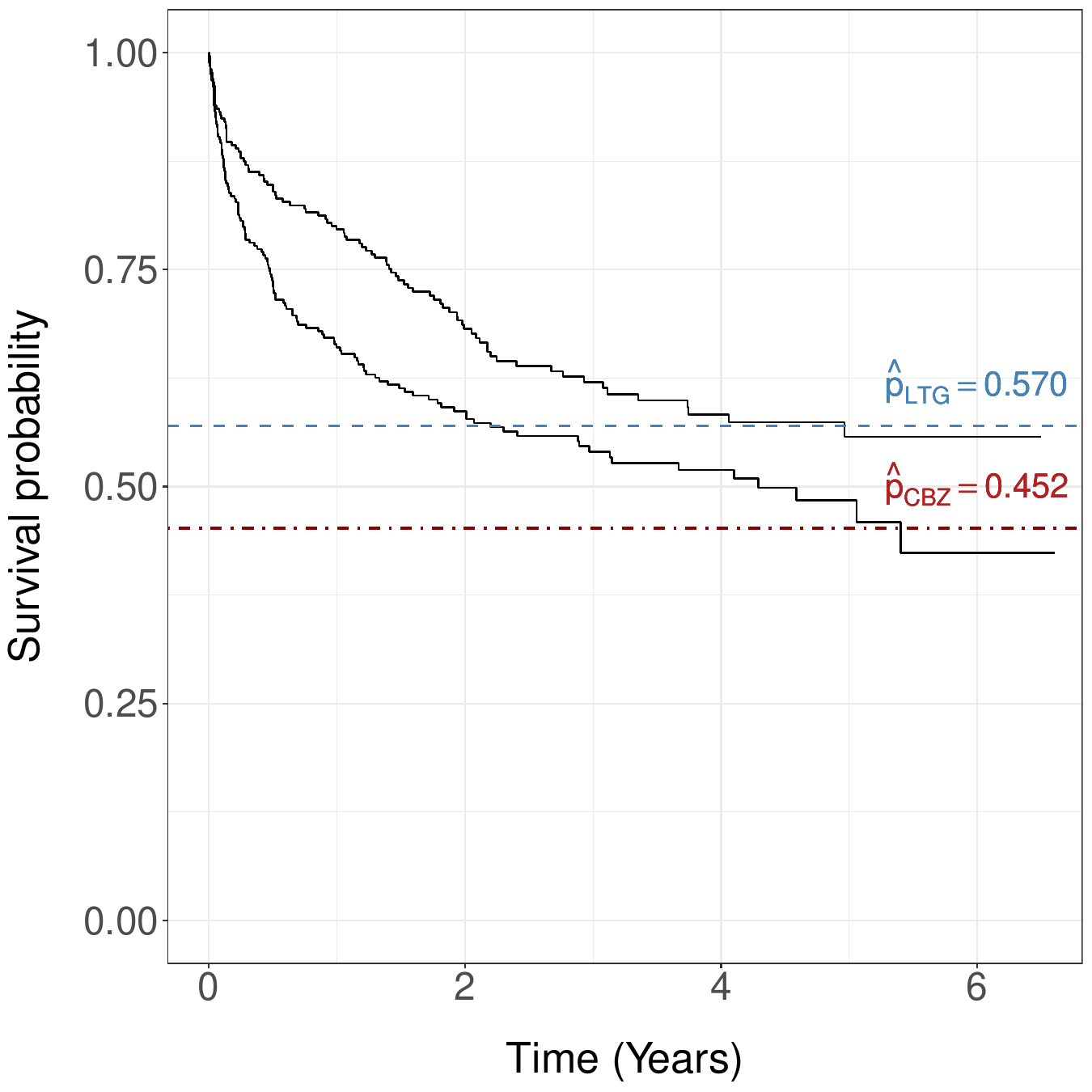}
    \caption{Survival curve of non-parametric Kaplan-Meier estimator with estimated cure fraction of both placebo (red) and new drug (blue) groups.}
    \label{fig:surv_curve_cf_est}
\end{figure}

It is important to note that this study has previously been reported in \cite{jacoby2015quality} using a larger sample. In that work, the authors applied both a linear mixed-effects model and a joint model to investigate the association between quality of life (QOL) measurements. However, the linear mixed-effects model exhibited convergence issues, likely due to the assumption of normality applied to biomarkers that are, in fact, discrete. This mismatch may compromise the inferential validity of the joint model, especially when incorporating multiple QOL outcomes. Therefore, in the present study, we model the two count-based outcomes using the Poisson distribution, as the mean and variance of the biomarkers are relatively similar.

We applied our joint model based on the defective Gompertz distribution to detect whether the shape parameter ($\alpha$) identifies the presence of a cure fraction in the data. The following conditional mean structures were considered for the two QOL biomarkers:
\begin{align*}
    \log(\mathbb{E}[\text{anxiety}_{ij}]) &= \beta_{01} + b_{01i} + (\beta_{11} + b_{11i}) t_{ij} + \beta_{21} \text{drug}_{i}, \\
    \log(\mathbb{E}[\text{depress}_{ij}]) &= \beta_{02} + b_{02i} + (\beta_{12} + b_{12i}) t_{ij} + \beta_{22} \text{drug}_{i}, \\
    (\bm{b}_{i} \mid {\bm\Sigma}) & \sim \mathcal{N}(\bm{0}, {\bm\Sigma}), \\
    {\bm\Sigma} &= 
    \begin{bmatrix}
    \sigma^2_{b_{01}} &  &  &  \\
    \rho_{(b_{01}, b_{11})} \sigma_{b_{01}} \sigma_{b_{11}} & \sigma^2_{b_{11}} &  &  \\
    \rho_{(b_{01}, b_{02})} \sigma_{b_{01}} \sigma_{b_{02}} & \rho_{(b_{11}, b_{02})} \sigma_{b_{11}} \sigma_{b_{02}} & \sigma^2_{b_{02}} & \\
    \rho_{(b_{01}, b_{12})} \sigma_{b_{01}} \sigma_{b_{12}} & \rho_{(b_{11}, b_{12})} \sigma_{b_{11}} \sigma_{b_{12}} & \rho_{(b_{02}, b_{12})} \sigma_{b_{02}} \sigma_{b_{12}} & \sigma^2_{b_{12}}
    \end{bmatrix},
\end{align*}

\noindent and a time-to-event sub-model for the study of failure of antiepileptic treatment:
$$
h_{i}(t) = \exp(\alpha t + \gamma_{0} + \psi_{1} \text{drug}_{i} + \gamma_{01} b_{01i} + \gamma_{11} b_{11i} + \gamma_{02} b_{02i} + \gamma_{12} b_{12i}).
$$

Table~\ref{tab:inferential_sumarry} presents the inferential summary of the Bayesian defective joint model applied to the epilepsy dataset. The first point of analysis concerns the parameter $\alpha < 0$, indicating the presence of a cure fraction under the integrated approach with the defective distribution. Together with the 95\% credible interval, this estimate provides evidence that the experimental data include patients who are not considered at risk of epileptic seizures.

\begin{table}[h!]
\centering
\caption{Results of posterior mean, posterior standard deviation (SD) and 95\% credible interval (CI) for the application of epileptic dataset with 2 QOL biomarkers (anxiety and depress).}
\begin{tabular}{lccc} \\ \hline
\multicolumn{1}{c}{Parameter} & Mean   & SD    & 95\% CI            \\ \hline
$\alpha$                      & -0.659 & 0.074 & (-0.807; -0.516) \\
$\gamma_{0}$                  & -0.680 & 0.092 & (-0.860; -0.500)    \\
$\gamma_{01}$                 & 0.919  & 0.571 & (-0.209; 2.039)  \\
$\gamma_{11}$                 & 0.916  & 0.898 & (-0.822; 2.714)  \\
$\gamma_{02}$                 & 0.680  & 0.632 & (-0.558; 1.931)  \\
$\gamma_{12}$                 & 0.178  & 0.893 & (-1.602; 1.914)  \\
$\psi_{1}$                    & -0.370 & 0.139 & (-0.643; -0.096)   \\
$\beta_{01}$                  & 2.680  & 0.021 & (2.640; 2.720)     \\
$\beta_{11}$                  & -0.022 & 0.010 & (-0.041; -0.002) \\
$\beta_{21}$                  & 0.034  & 0.028 & (-0.021; 0.089)  \\
$\beta_{02}$                  & 2.560  & 0.020 & (2.522; 2.599)   \\
$\beta_{12}$                  & -0.027 & 0.011 & (-0.048; -0.006) \\
$\beta_{22}$                  & -0.012 & 0.027 & (-0.064; 0.041)  \\
$\sigma_{b_{01}}$             & 0.287  & 0.013 & (0.264; 0.313)   \\
$\sigma_{b_{11}}$             & 0.111  & 0.007 & (0.098; 0.125)   \\
$\sigma_{b_{02}}$             & 0.260  & 0.012 & (0.237; 0.285)   \\
$\sigma_{b_{12}}$             & 0.112  & 0.007 & (0.100; 0.125)   \\
$\rho_{b_{01}, b_{11}}$       & -0.067 & 0.087 & (-0.236; 0.106)  \\
$\rho_{b_{01}, b_{02}}$       & 0.805  & 0.025 & (0.751; 0.849)   \\
$\rho_{b_{01}, b_{12}}$       & 0.043  & 0.081 & (-0.112; 0.198)   \\
$\rho_{b_{11}, b_{02}}$       & 0.026  & 0.090 & (-0.150; 0.202)   \\
$\rho_{b_{11}, b_{12}}$       & 0.180  & 0.067 & (0.044; 0.305)   \\
$\rho_{b_{02}, b_{12}}$       & -0.059 & 0.082 & (-0.217; 0.100)     \\ \hline
\label{tab:inferential_sumarry}
\end{tabular}
\end{table}


The association of the random effects for the first biomarker (anxiety) showed similar magnitudes on the risk of seizure occurrence. Although not significant, these positive effects suggest that a higher score of anxiety compared to population average at baseline, as well as a steeper increase over time may be associated with a higher risk of seizures. Similarly, for the second biomarker (depression), the random intercept had a greater average impact than the random slope of time on the risk of treatment failure, although none of the coefficients was statistically relevant at the 95\% level. The treatment effect from antiepileptic drugs revealed an expected reduction of $\exp(-0.37) = 0.69$, which means that the risk is reduced on average by 31\% using the modern drug LTG compared to traditional CBZ. To quantify the uncertainty regarding the effectiveness of the newer drug in preventing epileptic seizures, we used 1000 posterior samples of the fixed effect parameter $\psi_{1}$, and exponentiated these values as previously described. The 2.5\% and 97.5\% quantiles computed from the posterior sample of the estimator were 0.52 and 0.92, respectively. These values indicate that the reduction in seizure risk associated with the LTG drug, relative to the older CBZ drug, ranges from 48\% to 8\%.

Regarding fixed effects related to biomarkers, both intercept and slope were important for both outcomes, with a slight decrease observed over time. For both biomarkers, the effect of the drug (LTG) was small or negligible. Concerning the random effects parameters, the estimated variability for both the intercepts and slopes was similar across biomarkers, even at the extremes of the 95\% credible intervals. The highest correlation was observed between the random intercepts, as also reflected in the 95\% credible interval, which means that a higher score of anxiety compared to population average at baseline is associated to a higher score for depression. However, the near-zero correlation between the slopes suggests independent evolutions over time.

In addition to the description of survival curves by treatment group, where it is evident that patients treated with LTG exhibit a higher probability of survival compared to those treated with the traditional drug CBZ. Figure \ref{fig:surv_curve_cf_est} also illustrates the presence of proportional hazards between the groups. Furthermore, Figure \ref{fig:surv_curve_cf_est} highlights one of the main advantages of using models that account for a cure fraction: the estimation of the cure proportion within each treatment group. To obtain these estimates from the parametric model, we used 1000 posterior samples from the joint model, incorporating the random effects. For each sample point, we computed the cure fraction for each individual based on the result presented in Theorem \ref{teo1}. The expected values for each individual within each treatment group, along with their corresponding 95\% credible intervals, are shown in Figure \ref{fig:cure-prob-groups} in Appendix~\ref{appendix1}. Based on the posterior mean of the cure probability per individual, we used the mean value to estimate the plateau of the cure fraction in each treatment group. The average cure fraction for the CBZ group was 0.453, while for the LTG one it was 0.570.

Similarly, using the upper and lower bounds of the 95\% credible intervals for each patient, we calculated the mean limits to reflect the uncertainty associated with the probability of being cured within each group. For patients treated with the modern drug LTG, the average cure fraction ranged between (0.314; 0.755). In contrast, for the group receiving doses of the traditional drug CBZ, the cure fraction was between (0.213; 0.666). These results suggest that LTG treatment provides patients with a greater chance of living seizure-free compared to conventional CBZ treatment, with the influence of longitudinal associations of anxiety and depression scores. Several previously mentioned studies in the medical and pharmaceutical fields support the greater effectiveness of this treatment, as demonstrated in the present analysis.


We also explored the relevance of the defective form of the Gompertz distribution in the estimation process. Figure \ref{fig:surv_curve_gompertz_dgompertz_est} presents the survival curves derived from Equation \eqref{eq:surv_Gompertz}, using the mean posterior of the parameters $\alpha$ and $\gamma_0$ under both the defective Gompertz distribution and its conventional form (i.e., without expanding the parameter space to include negative real values). 

\begin{figure}[h!]
    \centering
    \includegraphics[width=0.5\linewidth]{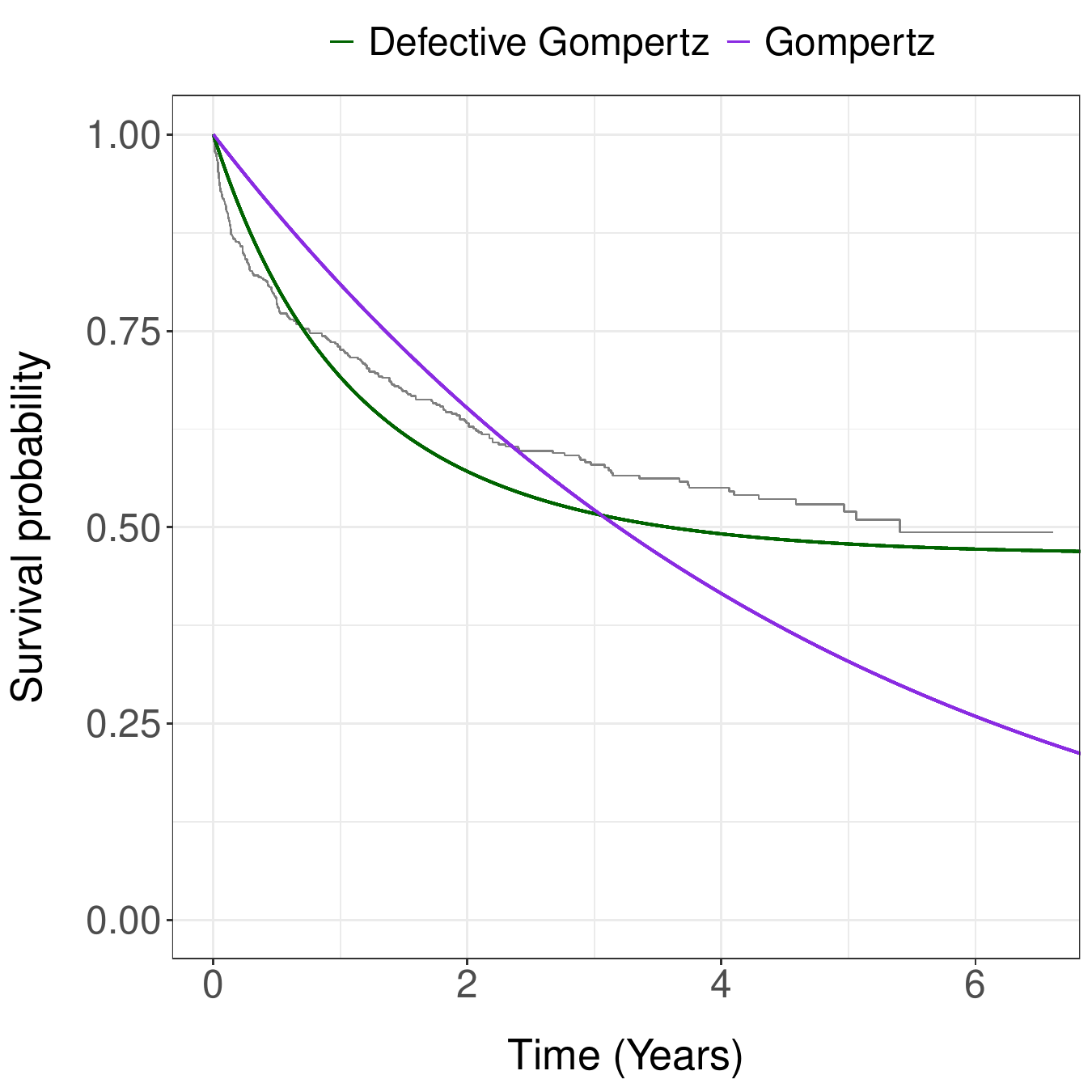}
    \caption{Comparison of the Kaplan-Meier survival curve and estimated curves from the defective Gompertz and Gompertz models.}
    \label{fig:surv_curve_gompertz_dgompertz_est}
\end{figure}

Unlike the defective version, the conventional form yielded approximate posterior means of $\alpha = 0.02$ and $\gamma_0 = -1.57$. These values directly influence the shape of the survival curve used to represent censored data. By not accounting for the possibility of a cure fraction, the joint model based on the conventional Gompertz distribution as the baseline produces a survival curve that monotonically decreases towards zero as follow-up time increases, thereby failing to capture the plateau observed in the nonparametric Kaplan–Meier estimate. This limitation highlights the advantages of adopting the more general defective formulation, which remains applicable even in the absence of a cured subpopulation in the observed sample.

\section{Conclusions} \label{sec:concl}

In this study, we have proposed a joint model for longitudinal and survival data that accounts for the possibility that individuals are immune to the event of interest. Using the defective Gompertz distribution, our proposal has integrated the presence of long-term survivors and allowed for the estimation of the cure fraction without requiring additional structural assumptions. This class of models offers a robust alternative to conventional mixture cure models by mitigating potential convergence issues commonly encountered with them. This work advances the application of defective distributions in joint modeling of long-term survivors, introducing new directions for inverse Gaussian and Dagum distributions, which have gained attention in the literature. In these models, the presence or absence of a cure fraction is also determined directly through a single parameter of the distribution, allowing for an integrated and efficient approach.

Our approach relies on Bayesian inference for parameter estimation, employing the fast and accurate INLA method. Simulation studies have demonstrated that the proposed estimation method yields estimators with low bias and satisfactory coverage probabilities, even with vague prior specifications. The application to an epileptic treatment dataset, which has been scarcely explored in the literature, resulted in a stable estimation process and successfully linked QOL scores with the time to antiepileptic drug failure. Our proposal is robust to more longitudinal biomarkers, providing more latent information on conditional risk. The survival sub-model could also be extended to multivariate settings such as competing risks or a multi-state outcomes. Furthermore, the inclusion of frailty terms to account for unobserved risk factors, could enhance the model's explanatory power, particularly given the limitations of the dataset available. Another promising extension of our joint model is the development of dynamic prediction scenarios, where the fully parametric approach using defective distributions may assist in forecasting event risk or immunity through longitudinal biomarkers.

\bibliographystyle{apalike}  
\bibliography{references}






\newpage

\appendix
\section{Estimation of individual cure fractions with 95\% credible intervals}
\label{appendix1}


\begin{figure}[htbp]
\centering
\begin{subfigure}[b]{0.45\textwidth}
  \centering
  \includegraphics[width=\linewidth]{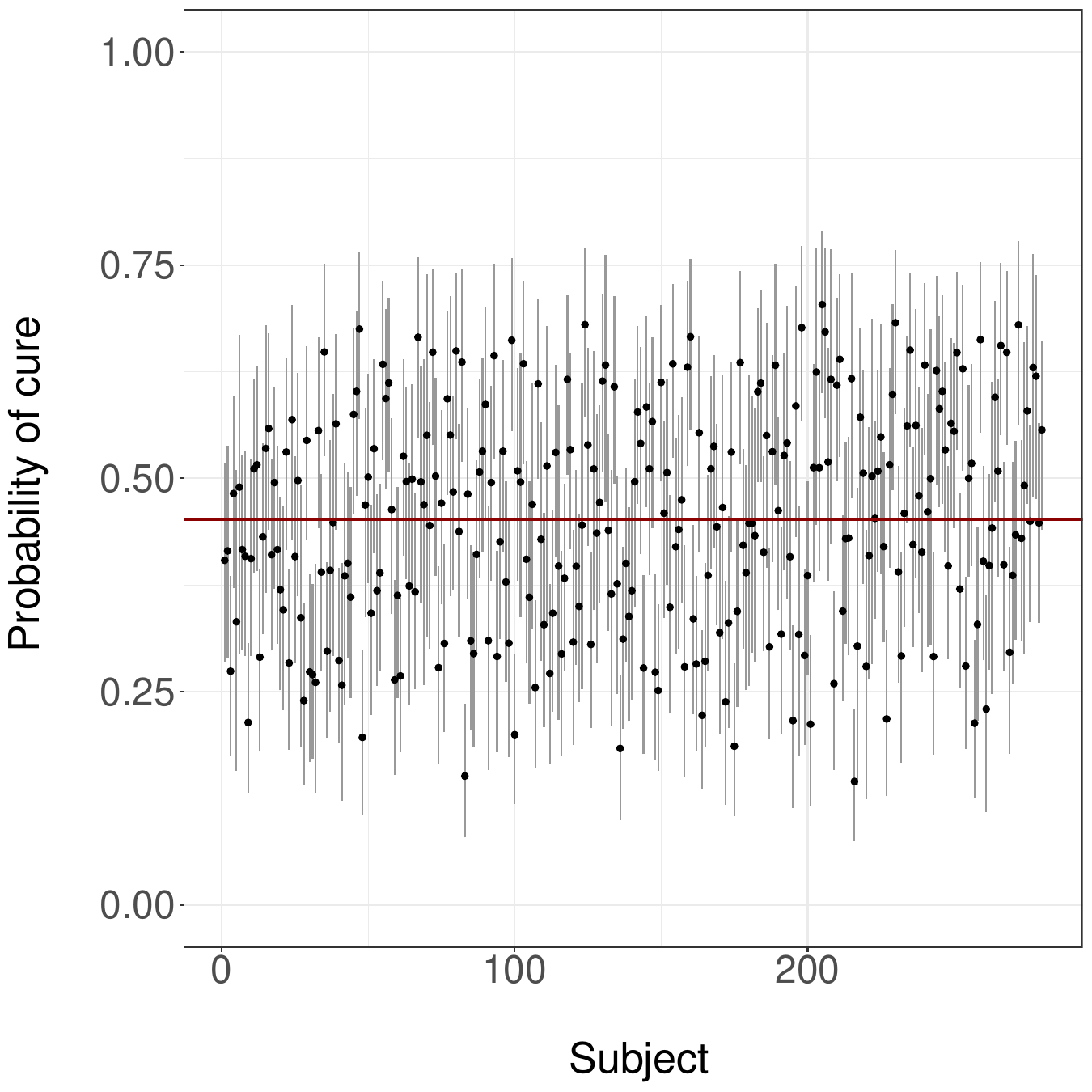}
  \caption{CBZ drug group}
  \label{fig:sub1}
\end{subfigure}
\hspace{0.8 cm}
\begin{subfigure}[b]{0.45\textwidth}
  \centering
  \includegraphics[width=\linewidth]{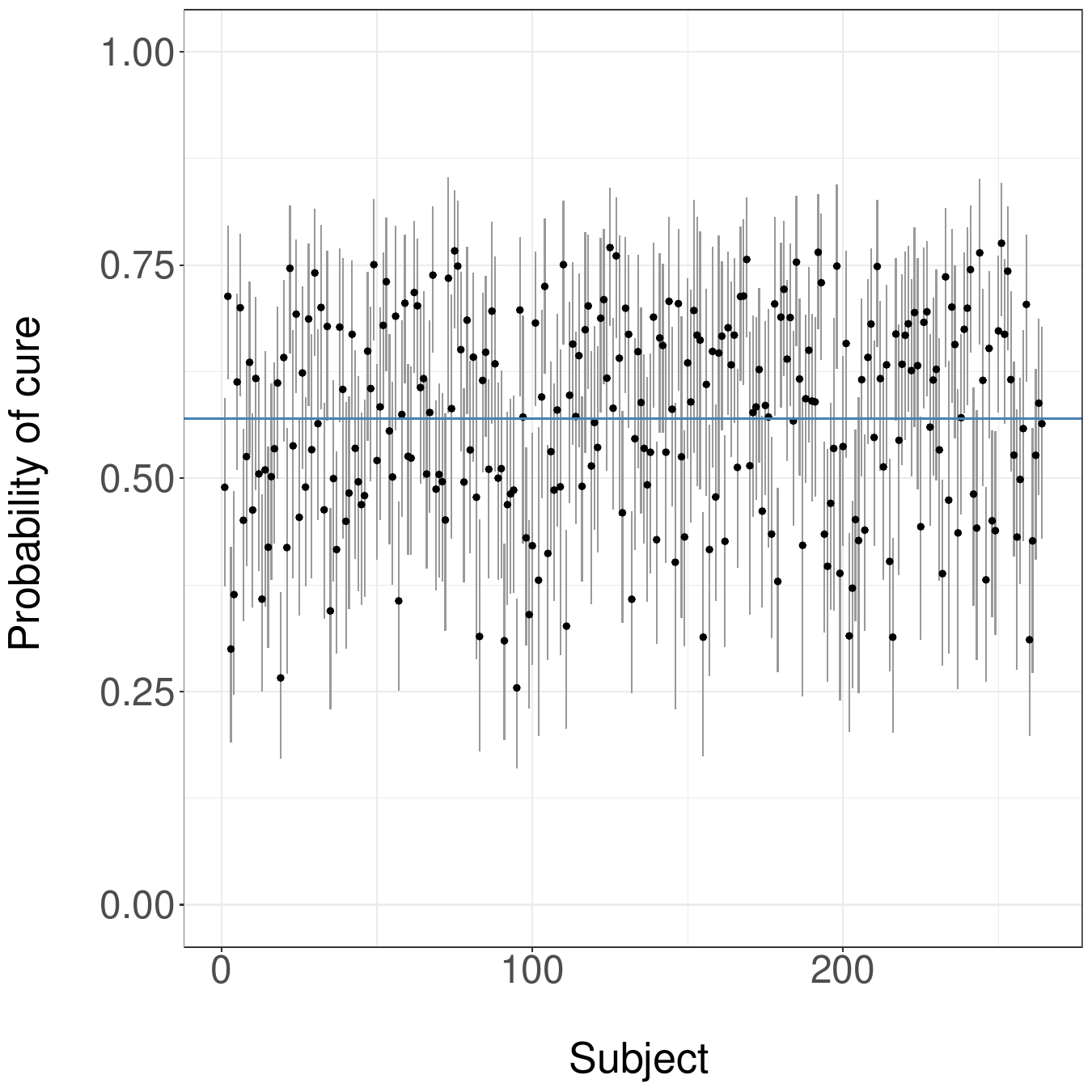}
  \caption{LTG drug group}
  \label{fig:sub2}
\end{subfigure}
\caption{Posterior mean and 95\% credible interval for the individual cure fraction per group of treatment.}
\label{fig:cure-prob-groups}
\end{figure}

\end{document}